\newcommand{\logic}[1]{\ensuremath{\mathcal{L}_{\operatorname{#1}}}}
\newcommand{\Bigarrow}[1]{\underset{#1}{\Longrightarrow}}
\numberwithin{definition}{section}
\numberwithin{proposition}{section}
\numberwithin{theorem}{section}
\begin{document}

\mainmatter  

\title{Providing personalized Explanations: a Conversational Approach}
\titlerunning{Providing personalized Explanations: a Conversational Approach}

\author{Jieting Luo\inst{1}, Thomas Studer \inst{2} \and Mehdi Dastani \inst{3}}
\authorrunning{Jieting Luo et al.} 
%
\tocauthor{Jieting Luo, Thomas Studer and Mehdi Dastani}
\institute{Zhejiang University, Hangzhou, China\\
\email{luojieting@zju.edu.cn}
\and
University of Bern, Bern, Switzerland\\
\email{thomas.studer@unibe.ch}
\and
Utrecht University, Utrecht, the Netherlands\\
\email{M.M.Dastani@uu.nl}
}

\maketitle
\bibliographystyle{splncs04}

\begin{abstract}
The increasing applications of AI systems require personalized explanations for their behaviors to various stakeholders since the stakeholders may have various knowledge and backgrounds. In general, a conversation between explainers and explainees not only allows explainers to obtain the explainees' background, but also allows explainees to better understand the explanations. In this paper, we propose an approach for an explainer to communicate personalized explanations to an explainee through having consecutive conversations with the explainee. We prove that the conversation terminates due to the explainee's justification of the initial claim as long as there exists an explanation for the initial claim that the explainee understands and the explainer is aware of. 
\end{abstract}

\keywords{Explanation; Personalization; Conversation; Justification Logic; Modal Logic}

\section{Introduction}

Explainable artificial intelligence (XAI) is one of the important topics in artificial intelligence due to the recognition that it is important for humans to understand the decisions or predictions made by the AI \cite{miller2019explanation}. Understanding the behavior of AI systems does not only improve the user experience and trust in such systems, but it also allows engineers to better configure them when their behavior needs to change. This is particularly important when AI systems are used in safety-critical domains such as healthcare where decisions made or influenced by clinical decision support systems ultimately affect human life and well-being. In such systems, related stakeholders and professionals must understand how and why certain decisions are made \cite{antoniadi2021current}. 

An important requirement for explainable AI systems is to ensure that the stakeholders with various backgrounds understand the provided explanations and the underlying rationale and inner logic of the decision or prediction process~\cite{tsai2021effects}. For example, an explanation of why we should drink enough water throughout the day that is formulated in specialized medical terms may be understandable to medical professionals, but not to young children who have no medical knowledge or background. Therefore, AI systems should be able to provide \emph{personalized} and relevant explanations that match the knowledge and background of their stakeholders. Hilton~\cite{hilton1990conversational} argues that an explanation is a \emph{social} process of conveying why a claim is made to someone. It is the conversations between explainers and explainees that allow explainers to obtain the explainees' background and allow explainees to better understand the explanations. 

In this paper, we propose a novel approach to automatically construct and communicate personalized explanations based on the explainer's beliefs about the explainee's background. Our approach is based on conversations between an explainer and an explainee, and exploits tools and results from justification logic~\cite{artemovFittingBook,justificationLogic2019}.
The first justification logic, the Logic of Proofs, has been introduced by Artemov to give a classical provability interpretation to intuitionistic logic~\cite{Art01BSL}. Later, various possible worlds semantics for justification logic have been developed~\cite{Art12SL,Fit05APAL,KuzStu12AiML,StuderLehmannSubsetModel2019}, which led to epistemic interpretations of justification logic. We will use a logic that features both the modal $\Box$-operator and explicit justification terms $t$, a combination that goes back to~\cite{ArtNog05JLC}. Our approach is built on the idea of reading \emph{an agent understands an explanation~$\mathcal{E}$ for a claim $F$} as \emph{the agent has a justification $t$ for $F$ such that $t$ also justifies all parts of $\mathcal{E}$}. This is similar to the logic of knowing why~\cite{KnowingWhy} where \emph{knowing why $F$} is related to \emph{having a justification for $F$}. With this idea, the explainer can interpret the explainee's background from his feedback and provide further explanations given what he has learned about the explainee.

We first develop a multi-agent modular model that allows us to represent and reason about agents' beliefs and justification. We then model how the explainee gains more justified beliefs from explanations, and how the explainer specifies his preferences over available explanations using specific principles. We finally model the conversation where the explainee provides his feedback on the received explanation and the explainer constructs a further explanation given his current beliefs about the explainee's background interpreted from the explainee's feedback. Our approach ensures that the conversation will terminate due to the explainee's justification of the initial claim as long as there exists an explanation for the initial claim that the explainee understands and the explainer is aware of.

\section{Multi-agent Modular Models}
Let $\mathrm{Prop}$ be a countable set of atomic propositions. The set of propositional formulas~\logic{Prop} is inductively defined as usual from $\mathrm{Prop}$, the constant $\bot$, and the binary connective $\to$. We now specify how we represent justifications and what operations on them we consider. We assume a countable set $\mathrm{JConst} = \{c_0, c_1, \ldots\}$ of justification constants. Further, we assume a countable set $\mathrm{JVar}$ of justification variables, where each variable is indexed by a propositional formula and a (possibly empty) list of propositional formulas, i.e.~if $A_1,\ldots,A_n,B \in \logic{Prop}$, then $x_B^{A_1,\ldots,A_n}$ is a justification variable.
Constants denote atomic justifications that the system no longer analyzes, and variables denote unknown justifications. Justification terms are defined inductively as follows:
\[t ::= c \mid x \mid t \cdot t \]
where $c \in \mathrm{JConst}$ and $x \in \mathrm{JVar}$. We denote the set of all terms by $\mathrm{Tm}$. A term is ground if it does not contain variables, so we denote the set of all ground terms by $\mathrm{Gt}$. A term can be understood as a proof or an evidence. Let $Agt$ be a finite set of agents. Formulas of the language $\logic{J}$ are defined inductively as follows:
\[A ::= p \mid \bot \mid A \to A \mid \Box_i A  \mid \llbracket t \rrbracket_i A \]
where $p \in \mathrm{Prop}$, $i \in Agt$ and $t \in \mathrm{Tm}$. Formula $\Box_i A$ is interpreted as ``agent $i$ believes $A$'', and formula $\llbracket t \rrbracket_i A$ is interpreted as ``agent $i$ uses $t$ to justify $A$''.

The model we use in this paper is a multi-agent modular model that interprets justification logic in a multi-agent context. It is a Kripke frame extended with a set of agents and two evidence functions. In general, evidence accepted by different agents are distinct, so evidence terms for each agent are constructed using his own basic evidence function.
\begin{definition}[Multi-agent Modular Models]\label{model}
A multi-agent modular model over a set of atomic propositions $\mathrm{Prop}$ and a set of terms $\mathrm{Tm}$ is defined as a tuple $\mathcal{M} = (Agt, W, \tilde{R}, \tilde{*}, \pi)$, where
\begin{itemize}
\item $Agt = \{1, 2\}$ is a set of agents, we assume that it is always the case that agent 1 announces an explanation to agent 2; 
\item $W \not= \emptyset$ is a set of worlds;
\item $\tilde{R} = \{R_1, R_2\}$ for each agent in $Agt$, where $R_i \subseteq W \times W$ is a reflexive and transitive accessibility relation;
\item $\tilde{*} = \{*_1, *_2\}$ for each agent in $Agt$, where $*_i$ is an evidence function $*_i: \mathrm{Tm} \times W \to \mathcal{P}(\logic{Prop})$ that maps a term $t \in \mathrm{Tm}$ and a world $w \in W$ to a set of formulas in $\logic{Prop}$; 
\item $\pi: \mathrm{Prop} \to 2^W$ is an evaluation function for the interpretation of propositions.
\end{itemize}
\end{definition}
We assume that the agents have finite reasoning power. Therefore, we restrict our models such that for each $w\in W$ and agent $i$,
\begin{itemize}
\item there are only finitely many $t \in \mathrm{Gt}$ such that $*_i(t,w)$ is non-empty, and
\item for each $t \in \mathrm{Gt}$, the set  $*_i(t,w)$ is finite.
\end{itemize}
Moreover, for any agent $i$ and world $w$, it is not necessary that 
\[
\text{
$(F \to G) \in *_i(s, w)$ and $F \in *_i(t, w)$ imply $G \in *_i(s \cdot t, w)$
}\tag{\dagger}
\]

\begin{definition}[Truth Evaluation]
We define what it means for a formula $A$ to hold under a multi-agent modular model $\mathcal{M}$ and a world $w$, written as $\mathcal{M}, w \models A$, inductively as follows:
\begin{itemize}
\item $\mathcal{M}, w \not \models \bot$;
\item $\mathcal{M}, w \models P$ iff $w \in \pi(P)$;
\item $\mathcal{M}, w \models F \to G$ iff  $\mathcal{M}, w \not \models F$ or $\mathcal{M}, w \models G$;
\item $\mathcal{M}, w \models \Box_i F$ iff for any $u \in W$, if $w R_i u$, then $\mathcal{M}, u \models F$;
\item $\mathcal{M}, w \models \llbracket t \rrbracket_i F$ iff $F \in *_i(t, w)$.
\end{itemize}
Other classical logic connectives (e.g.,``$\land$'', ``$\lor$'') are assumed to be defined as abbreviations by using $\bot$ and $\to$ in the conventional manner. We say that a formula $A$ is valid in a model $\mathcal{M}$, written as $\mathcal{M} \models A$ if $\mathcal{M}, w \models A$ for all $w \in W$. We say that a formula $A$ is a validity, written as $\models A$ if $\mathcal{M} \models A$ for all models $\mathcal{M}$. 
\end{definition}

We require that modular models satisfy the property of \emph{justification yields belief}:
for any ground term $t$, agent $i$, and world $w$, if $F \in *_i(t, w)$, then for any $u \in W$, if $w R_i u$, then $\mathcal{M}, u \models F$, which gives rise to the following validity:
\begin{equation}\tag{JYB}
\models \llbracket t \rrbracket_i F \to  \Box_i F.
\end{equation}
Note that in contrast to usual models of justification logic, we require justification yields belief only for ground terms (and not for all terms as is originally required in modular models). The reason is that we interpret justification variables in a new way. Traditionally, a justification variable stands for an arbitrary justification. Hence   $\llbracket x \rrbracket_i F \to  \Box_i F$ should hold: no matter which justification we have for $F$, it should yield belief of $F$. In this paper, we use a different reading of justification variables. They stand for open assumptions, which do not (yet) have a justification. Therefore, $\llbracket x \rrbracket_i F$ will not imply belief of $F$. Our modular model gives rise to the following validity due to the reflexivity of the accessibility relations:
\[ \models \Box_i F \to F. \]
Combining this with (JYB), we find that justifications by ground terms are factive: for any ground term $t$ and any formula $F$, we have
\[\models \llbracket t \rrbracket_i F \to F.\]
Notice that our model does not respect the usual application operation ($\cdot$) on evidence terms due to the removal of constraint ($\dagger$) from our model, 
\[
\not\models \llbracket s \rrbracket_i (F \to G) \to (\llbracket t \rrbracket_i F \to \llbracket s \cdot t \rrbracket_i G),
\]
because agents are limited in their reasoning powers and thus might not be able to derive all of the logical consequences from their justified beliefs by constructing proofs, which becomes the reason why agents need explanations.

\section{Understanding and Learning from Explanations}
Given a claim, an agent can construct a deduction for a claim, which is what we call an explanation of the claim in this paper. An explanation is inductively defined as a tree of formulas.
\begin{definition}[Explanations]
Given formulas $A_1, \ldots, A_n, B \in \logic{Prop}$, a simple explanation is of the form
\begin{prooftree} 
\AxiomC{$A_1, \ldots, A_n$}
\UnaryInfC{$B$}
\end{prooftree}
An explanation, denoted as $\mathcal{E}$, is inductively defined as follows: it is a simple explanation or of the form
\begin{prooftree} 
\AxiomC{$\mathcal{E}_1, \ldots, \mathcal{E}_n$}
\UnaryInfC{$B$}
\end{prooftree}
where $\mathcal{E}_1, \ldots, \mathcal{E}_n$ are explanations. We say that 
\begin{itemize}
    \item $B$ is the claim of $\mathcal{E}$, denoted as $claim(\mathcal{E}) = B$;
    \item  $Pr(\mathcal{E}, B)$ is the list of premises of $B$ in $\mathcal{E}$, that is,  $Pr(\mathcal{E}, B) = A_1, \ldots, A_n$ if $\mathcal{E}$ is a simple explanation; otherwise, $Pr(\mathcal{E}, B) = claim(\mathcal{E}_1), \ldots, claim(\mathcal{E}_n)$;
    \item formula $F$ in $\mathcal{E}$ is a hypothesis if $Pr(\mathcal{E}, F) = \emptyset$, and $H(\mathcal{E})$ is the set of hypotheses of $\mathcal{E}$; 
    \item formula $F$ in $\mathcal{E}$ is a derived formula if $Pr(\mathcal{E}, F) \not= \emptyset$, and $D(\mathcal{E})$ is the set of derived formulas of $\mathcal{E}$. 
\end{itemize}
\end{definition}


One important property of justification logic is its ability to internalize its own notion of proof. 
If $B$ is derivable from $A_1, \ldots, A_n$, then there exists a term $t \cdot x_1 \cdot \cdots \cdot x_n $ such that $\llbracket t \cdot x_1 \cdot \cdots \cdot x_n  \rrbracket_i B$ is derivable from $\llbracket x_1 \rrbracket_i A_1, \ldots, \llbracket x_n \rrbracket_i A_n$.\footnote{This property requires a so-called axiomatically appropriate constant specification.} The justification term $t \cdot x_1 \cdot \cdots \cdot x_n$ justifying $B$ represents a blueprint of the derivation of $B$ from $A_1, \ldots, A_n$. In this section, we will define a procedure that can mimic the application operation on terms to construct derived terms in order to internalize the deduction of an explanation. Given an explanation, a derived term of the conclusion with respect to the explanation is constructed with the justifications of the premises and the deduction from the premises to the conclusion. Typically, if there exists any premises that the agent cannot justify, then a variable is used for its justification in the derived term; if the agent cannot justify the deduction, then a variable is used as the derived term.
\begin{definition}[Construction of Derived Terms]
Given a multi-agent modular model $\mathcal{M}$, a world $w$, an explanation $\mathcal{E}$, and a derived formula $B$ occurring in $\mathcal{E}$, we define agent 2's derived term of $B$ with respect to $\mathcal{E}$ inductively as follows:
\begin{itemize}
\item Case: $B$ is the claim of a simple explanation $\mathcal{E'} = A_1, \ldots, A_n / B$.  We distinguish two cases:
\begin{enumerate}
\item
If  there exists $d \in \mathrm{Gt}$ such that $\mathcal{M}, w \models \llbracket d \rrbracket_2 (A_1 \to  (\cdots \to (A_n \to B)\cdots))$, then the derived term has the form $d \cdot t_1 \cdots t_n$ where
the terms $t_i$ are given by:
if there exists $s_i \in \mathrm{Gt}$ with $\mathcal{M}, w \models \llbracket s_i \rrbracket_2 A_i$, then set $t_i = s_i$; else, set  $t_i = x_{A_i}$;
\item
otherwise, the derived term of $B$ has the form  $x_B^{Pr(\mathcal{E}, B)}$.
\end{enumerate}
\item Case: $B$ is the claim of an explanation $\mathcal{E'} = \mathcal{E}'_1, \ldots \mathcal{E}'_n / B$. 
We distinguish two cases:
\begin{enumerate}
\item
If  there exists $d \in \mathrm{Gt}$ such that 
\[
\mathcal{M}, w \models \llbracket d \rrbracket_2 (claim(\mathcal{E}'_1) \to  (\cdots \to (claim(\mathcal{E}'_n) \to B)\cdots)), 
\]
then the derived term has the form $d \cdot t_1 \cdots t_n$ where
each $t_i$ is the derived term of $claim(\mathcal{E}'_i)$ with respect to $\mathcal{E}$;
\item
otherwise, the derived term of $B$ has the form  $x_B^{Pr(\mathcal{E}, B)}$.
\end{enumerate}
\end{itemize}
\end{definition}

\begin{example}
Assume that we have a multi-agent modular model $\mathcal{M}$. Agent~2 hears an example $\mathcal{E} = A / B / C$ in world $w$, and it is the case that
\begin{align*}
& \mathcal{M}, w \models \llbracket t_A \rrbracket_2 A\\
& \mathcal{M}, w \models \llbracket d_{A \to B} \rrbracket_2 (A \to B)\\
& \mathcal{M}, w \models \llbracket d_{B \to C} \rrbracket_2 (B \to C)
\end{align*}
for ground terms $t_A$, $d_{A \to B}$, and $d_{B \to C}$, then the derived term of $B$ with respect to $\mathcal{E}$ is $d_{A \to B} \cdot t_A$, and the derived term of $C$ with respect to $\mathcal{E}$ is $d_{B \to C} \cdot (d_{A \to B} \cdot t_A)$. If agent 2 cannot justify $A$, then the derived term of $C$ with respect to $\mathcal{E}$ would become $d_{B \to C} \cdot (d_{A \to B} \cdot x_A)$; if agent 2 cannot justify $A \to B$, then the derived term of $C$ with respect to $\mathcal{E}$ would become $d_{B \to C} \cdot x_B^A$.
\end{example}
Agent 2's justification for a deduction can be seen as his reasoning capability and can be different from agent to agent. If agent 2 cannot justify a deduction step, then the deduction is beyond his reasoning capability. In real life, agents' reasoning capabilities can be limited by factors such as age, profession and experience. For example, a mathematician can follow complicated mathematical proofs, while a primary student can only follow simple mathematical proofs. Further, for a derived formula in an explanation, agent 2 might have another term that has nothing to do with the explanation to justify it. But using this term to justify the formula does not mean that agent 2 can follow the explanation, so we need to require that a derived term be formed by justification terms that are used to justify its premises and deduction in the explanation. We should also notice that a derived term of a derived formula with respect to an explanation might not be unique, because there might exist multiple terms for agent 2 to justify the hypotheses in the explanation, making the derived terms different. Intuitively, an agent understands an explanation if the derived term of its conclusion does not contain any variables (unknown justification), i.e., it is a ground term. 
\begin{definition}[Understanding Explanations]
Given a multi-agent modular model~$\mathcal{M}$, a world $w$ and an explanation $\mathcal{E}$, let $t$ be agent 2's derived term of $claim(\mathcal{E})$ with respect to $\mathcal{E}$ in world $w$. We say that agent 2 understands~$\mathcal{E}$ in world $w$ iff $t$ is a ground term.
\end{definition}
Thus, if derived term $t$ contains variables, meaning that there exists a hypothesis or a deduction from $\mathcal{E}$ that agent 2 cannot justify, then agent 2 cannot understand $\mathcal{E}$.
\begin{example}
In Example 1, agent 2 understands explanation $\mathcal{E}$ if the derived term of $C$ with respect to $\mathcal{E}$ is $d_{B \to C} \cdot (d_{A \to B} \cdot t_A)$; agent 2 cannot understand explanation $\mathcal{E}$ if the derived term of $C$ with respect to $\mathcal{E}$ is $d_{B \to C} \cdot (d_{A \to B} \cdot x_A)$ or $d_{B \to C} \cdot x_B^A$.
\end{example}

Once an agent hears an explanation, he can update his justification with derived terms that he constructs, which means that the agent learns from the explanation and has more justified beliefs.
\begin{definition}[Learning from Explanations]\label{def:learning}
Given a multi-agent modular model $\mathcal{M}$, a world $w$ and an explanation $\mathcal{E}$, after agent 2 hears $\mathcal{E}$ in $w$, $\mathcal{M}$ is updated as $\mathcal{M}|(2, w, \mathcal{E})$, where $\mathcal{M}|(2, w, \mathcal{E}) = (Agt, W, \tilde{R}, \tilde{*^\prime}, \pi)$ is defined as follows:  
\begin{itemize}
    \item for any $F \in D(\mathcal{E})$, $*_2^\prime(t, w) = *_2(t, w) \cup \{F\}$, where $t$ is agent 2's derived term of $F$ with respect to $\mathcal{E}$;
    \item for any $s \in \mathrm{Tm}$ and any $G \in \logic{Prop}$, if $G \in *_2(s, w)$, then $G \in *_2^\prime(s[r / x_{claim(\mathcal{E})}], w)$ and $G \in *_2^\prime(s[r / x_{claim(\mathcal{E})}^{H(\mathcal{E})}], w)$, where $r$ is agent 2's derived term of $claim(\mathcal{E})$ with respect to $\mathcal{E}$;
    \item for agent 1, $*_1^\prime(\cdot) = *_1(\cdot)$.
\end{itemize}
\end{definition}
In words, after agent 2 hears explanation $\mathcal{E}$ in world $w$, for each derived formula $F$ in $\mathcal{E}$, agent 2's justification of $F$ will be updated with its derived term $t$; the derived term $r$ of $claim(\mathcal{E})$ with respect to $\mathcal{E}$ is substituted for every occurrence of $x_{claim(\mathcal{E})}$ and $x_{claim(\mathcal{E})}^{H(\mathcal{E})}$ in agent 2's justification. Recall that agents in this paper have limited reasoning powers and thus might not be able to derive all of the logical consequences from their beliefs. Hearing an explanation allows agent 2 to gain new justified beliefs by connecting existing justified beliefs. Note that we do not need to remove agent 2's epistemic access $R_2$ to worlds where $F$ does not hold in order to guarantee JYB. The reason is as follows: if $t$ is not a ground term, then agent 2 has yet to justify $F$ and thus agent 2's belief should remain the same as before learning from explanation $\mathcal{E}$; if $t$ is a ground term, then agent 2 believes $F$ due to the JYB constraint, and the belief of $F$ is ensured by the validity of the modal $k$-axiom (our model still respects the epistemic closure). Other agents' justification and epistemic access remain the same. As standard, the resulting updated model is still a multi-agent modular model.
\begin{proposition}
Given a multi-agent modular model $\mathcal{M}$, a world $w$ and an explanation $\mathcal{E}$, after agent 2 hears explanation $\mathcal{E}$ in world $w$, $\mathcal{M}$ is updated as $\mathcal{M}|(2, w, \mathcal{E})$, which is still a multi-agent modular model.
\end{proposition}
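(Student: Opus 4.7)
The plan is a direct verification of every clause of Definition 2.1 for $\mathcal{M}|(2, w, \mathcal{E})$. The data $(Agt, W, \tilde{R}, \pi)$ and $*_1$ are copied verbatim from $\mathcal{M}$, so the underlying frame, the reflexivity and transitivity of both $R_i$, and agent~1's evidence function are inherited without work. Only the new evidence function $*_2'$ requires attention, and for it two things must be checked: the two finiteness conditions and the (JYB) validity.

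For finiteness, I would exploit that an explanation is a finite tree, so $D(\mathcal{E})$ is finite and the first clause of Definition~2.6 touches only finitely many terms, adding at most one formula each. For the substitution clause, fix any ground $u$; the preimages under $s \mapsto s[r/x_{claim(\mathcal{E})}]$ are obtained by choosing a subset of the (finitely many) occurrences of $r$ inside $u$ and reading them back as the variable, yielding only finitely many source terms $s$, and the same bound applies for the decorated variable. This keeps each $*_2'(u, w)$ finite and bounds the number of ground inputs that receive new entries.

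For JYB, fix a ground $u$ and a newly added $F \in *_2'(u, w)$. Since every formula appearing in an explanation is propositional and $\pi$ is unchanged, it suffices to prove $\mathcal{M}, v \models F$ at every $v$ with $w R_2 v$. Entries added through the first clause of Definition~2.6 are handled by structural induction on the sub-explanation rooted at $F$: groundness of $u$ forces the derived-term construction into its first sub-case, so $u = d \cdot t_1 \cdots t_n$ with $d$ ground and $\llbracket d \rrbracket_2 (A_1 \to \cdots \to A_n \to F)$ holding in $\mathcal{M}$, and each $t_i$ ground and witnessing the $i$-th premise either directly (simple case) or by the inductive hypothesis applied to a strictly smaller sub-explanation (composite case); JYB in $\mathcal{M}$ delivers the implication and the premises at every $v$, and propositional modus ponens yields $F$.

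The substitution clause is the principal obstacle. A new entry $G \in *_2'(s[r/x_{claim(\mathcal{E})}], w)$ requires JYB only when the substituted term is ground, which forces $s$'s only variable to be $x_{claim(\mathcal{E})}$ (respectively $x_{claim(\mathcal{E})}^{H(\mathcal{E})}$) and $r$ to be a ground derived term. By the previous paragraph applied to $r$, the formula $claim(\mathcal{E})$ then holds at every $R_2$-successor of $w$. The remaining step appeals to the intended reading of these decorated variables as placeholders for exactly the hypothesis, respectively the deduction, of $claim(\mathcal{E})$, combined with closure of $\Box_2$ under modus ponens (the $k$-axiom invoked in the remark after Definition~2.6), to transfer truth of $claim(\mathcal{E})$ to truth of $G$ at each accessible $v$. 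Making this transfer rigorous amounts to carrying the invariant that any variable-bearing entry of $*_2$ tracks a genuine conditional derivation of its formula from the formulas indexing its variables, preserved across every learning update; this is the subtle point on which the verification ultimately hinges.
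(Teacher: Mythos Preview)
The paper supplies no labelled proof for this proposition. Its only argument is the informal remark in the paragraph immediately preceding the proposition, which addresses JYB for the first update clause alone: if the derived term of $F$ is ground, then agent~2 already justifies the premises and the implicational chain with ground terms in $\mathcal{M}$, so JYB in $\mathcal{M}$ plus the modal $k$-axiom give $\Box_2 F$. Your structural-induction argument for that clause is precisely a careful unfolding of this remark, so on that point you and the paper coincide.

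Everything else in your proposal goes beyond what the paper offers. The paper says nothing about the finiteness constraints or about JYB under the substitution clause. Your preimage-counting handles the second finiteness condition (each $*_2'(u,w)$ finite), but note that it does not obviously secure the first (only finitely many ground $u$ with non-empty $*_2'(u,w)$): Definition~2.1 bounds only the \emph{ground} $s$ with non-empty $*_2(s,w)$, so nothing forbids infinitely many non-ground $s$ whose sole variable is $x_{claim(\mathcal{E})}$, each producing a distinct ground $s[r/x]$ when $r$ is ground. That is a loose end in the framework itself, not a slip on your part.

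Your candour about the substitution clause for JYB is well placed. The invariant you name---that every variable-bearing entry of $*_2$ encodes a genuine conditional derivation from the formulas decorating its variables---is exactly what is needed, and it is never stated or proved in the paper. You have not missed anything the authors provided; you have made explicit a gap the paper leaves open.
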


We then extend our language $\logic{J}$ with a new formula of the form $[i: \mathcal{E}]\varphi$, read as ``$\varphi$ is true after agent $i$ hears explanation $\mathcal{E}$'', and its evaluation is defined with respect to a multi-agent modular model $\mathcal{M}$ and a world $w$ as follows:
\[
\mathcal{M}, w \models [i: \mathcal{E}]\varphi \quad \text{iff} \quad \mathcal{M}|(i, w, \mathcal{E}), w \models \varphi.
\]
The learning process gives rise to some intuitive consequences. First of all, when agent 2 is already aware of the explanation before it is announced, agent 2 will not gain any new justified from the explanation. Secondly, it is possible for agent 2 to gain justification for the formulas that are not contained in the explanation, because the learning process contains substituting derived terms of formulas for their corresponding variables, which means that agent 2 can justify more than what an explanation has. 

Since an explanation is defined as a formula tree, we have the sufficient and necessary conditions for understanding an explanation: an agent understands an explanation if and only if the agent can justify all the hypotheses and deduction steps that are used in the explanation. Let $G$ be a formula and $L=A_1,\ldots,A_n$ be a list of formulas. Then the expression $\Bigarrow{L}  G$ stands for $A_1 \to (\cdots \to (A_n \to G)\cdots)$.
\begin{proposition}\label{prop:understanding:1}
Given a multi-agent modular model $\mathcal{M}$, a world $w$ and an explanation $\mathcal{E}$, agent 2 understands explanation $\mathcal{E}$ iff
\begin{itemize}
\item for any $F \in H(\mathcal{E})$, there exists a ground term $t \in \mathrm{Gt}$ such that 
$
\mathcal{M}, w \models \llbracket t \rrbracket_2 F,
$
and
\item for any $G \in D(\mathcal{E})$, there exists a ground term $s \in \mathrm{Gt}$ such that 
\[
\mathcal{M}, w \models \llbracket s \rrbracket_2 (\Bigarrow{Pr(\mathcal{E}, G)}  G
),
\]
\end{itemize}
\end{proposition}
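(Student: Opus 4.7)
The plan is to proceed by structural induction on the explanation $\mathcal{E}$, exploiting the observation that in the construction of derived terms a justification variable is introduced in only two situations: either as $x_{A_i}$ when a hypothesis $A_i$ lacks a ground justification, or as $x_B^{Pr(\mathcal{E},B)}$ when the deduction step toward a derived formula $B$ lacks a ground justification. Consequently, the derived term of $claim(\mathcal{E})$ is ground iff neither situation ever arises during the recursion, and this is precisely the conjunction of the two conditions in the proposition.

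For the forward direction, I would assume the derived term $t$ of $claim(\mathcal{E})$ is ground and split on the outermost shape of $\mathcal{E}$. If $\mathcal{E}$ is a simple explanation $A_1, \ldots, A_n / B$, then $t$ being ground excludes clause 2 of the construction, hence yields a ground witness $d$ for $\Bigarrow{A_1,\ldots,A_n} B$; and it forces each factor of $t = d \cdot t_1 \cdots t_n$ to be ground, which excludes $t_i = x_{A_i}$ and so produces ground witnesses for every $A_i$. If $\mathcal{E}$ has the form $\mathcal{E}'_1, \ldots, \mathcal{E}'_n / B$, the same reasoning supplies a ground witness for the implication down to $B$ and ensures that each inner $t_i$---the derived term of $claim(\mathcal{E}'_i)$ with respect to $\mathcal{E}$---is ground, whereupon the induction hypothesis applied to each $\mathcal{E}'_i$ supplies the ground witnesses for every hypothesis and every derived formula living inside $\mathcal{E}'_i$. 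Since $H(\mathcal{E}) = \bigcup_i H(\mathcal{E}'_i)$ and $D(\mathcal{E}) = \{B\} \cup \bigcup_i D(\mathcal{E}'_i)$, this covers everything required. The backward direction is the mirror image: the two bullet conditions guarantee that at every level of the recursion clause~1 is triggered, so by induction the derived term assembles as $d \cdot t_1 \cdots t_n$ with $d$ and every $t_i$ ground, hence is itself ground.

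The only minor subtlety is that the inductive clause recursively computes the derived term of $claim(\mathcal{E}'_i)$ with respect to the ambient $\mathcal{E}$ rather than with respect to $\mathcal{E}'_i$ in isolation, so one must verify that the induction hypothesis (naturally phrased for standalone explanations) still applies. This is harmless because $Pr(\mathcal{E},G) = Pr(\mathcal{E}'_i,G)$ whenever $G$ lies inside $\mathcal{E}'_i$, and the recursion only inspects immediate children, so the two constructions agree at least on which terms come out ground. I expect this bookkeeping to be the only obstacle, and it is essentially cosmetic; no real mathematical difficulty arises beyond carefully unpacking the definition of derived term.
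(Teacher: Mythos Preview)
The paper states this proposition without proof, so there is no authorial argument to compare against. Your structural induction on $\mathcal{E}$ is the natural and correct approach: the case split mirrors the two clauses in the derived-term construction, and the decompositions $H(\mathcal{E}) = \bigcup_i H(\mathcal{E}'_i)$ and $D(\mathcal{E}) = \{B\} \cup \bigcup_i D(\mathcal{E}'_i)$ are exactly what is needed to carry the induction hypothesis through. Your remark about ``with respect to $\mathcal{E}$'' versus ``with respect to $\mathcal{E}'_i$'' is accurate and worth stating: since the recursion only looks at the subtree rooted at the formula being processed, the two computations literally coincide. One small point you could make explicit for completeness is that although the derived term is not uniquely determined (the definition allows any choice of ground witnesses $d$ and $s_i$), whether it is ground is independent of those choices, so the notion of understanding is well-defined and your argument applies to every admissible derived term.
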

Conversely, if there exists a hypothesis that the agent cannot justify, or if there exists a deduction step that is beyond the agent's reasoning capability, the agent cannot understand the announced explanation.

Another important property of understanding and learning from an explanation is that for an agent's justified beliefs that have nothing to do with the explanation, it will remain the same after hearing an explanation, and it was also the case before hearing the explanation.
\begin{proposition}
Given a multi-agent modular model $\mathcal{M}$, a world $w$ and an explanation $\mathcal{E}$, if agent 2 constructs a derived term $t_F$ for each derived formula $F$ with respect to $\mathcal{E}$, then for any ground term $s \in \mathrm{Gt}$ that does not contain $t_F$ and any formula $P \in \logic{Prop}$,
\[\mathcal{M}, w \models \llbracket s \rrbracket_2 P \leftrightarrow [2: \mathcal{E}] \llbracket s \rrbracket_2 P.\]
\end{proposition}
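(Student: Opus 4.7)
The plan is to unpack Definition~\ref{def:learning} and argue that neither of its additive clauses actually contributes anything at the term $s$, using the two hypotheses that $s$ is ground and that $s$ contains no derived term $t_F$. Both directions then reduce to straightforward bookkeeping.

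For the direction $\Rightarrow$, I would start from $P \in *_2(s,w)$ and apply the second clause of Definition~\ref{def:learning} to $s$ itself. Since $s \in \mathrm{Gt}$, no justification variable occurs in $s$; in particular $x_{claim(\mathcal{E})}$ does not occur in $s$, so the substitution $s[r/x_{claim(\mathcal{E})}]$ is the identity (where $r = t_{claim(\mathcal{E})}$). The clause therefore gives $P \in *_2'(s,w)$ directly, which is exactly $\mathcal{M}, w \models [2:\mathcal{E}]\llbracket s \rrbracket_2 P$.

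For $\Leftarrow$, I would do a case analysis on how $P$ entered $*_2'(s,w)$, reading Definition~\ref{def:learning} as the least extension of $*_2$ closed under its listed rules. There are three possibilities. First, $P$ may already be in $*_2(s,w)$, which is exactly what we want. Second, the first clause might apply, forcing $s = t_F$ and $P = F$ for some $F \in D(\mathcal{E})$; this is ruled out by the assumption that $s$ contains no $t_F$. Third, $P \in *_2(s',w)$ for some $s'$ with $s = s'[r/x_{claim(\mathcal{E})}]$ or $s = s'[r/x_{claim(\mathcal{E})}^{H(\mathcal{E})}]$. In this third case, if the substitution were non-trivial then $r$ would appear as a subterm of $s$; but $r$ is itself a derived term (namely $t_F$ for $F = claim(\mathcal{E})$), contradicting the hypothesis. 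Hence the substitution is trivial, $s' = s$, and $P \in *_2(s,w)$ as required.

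The main obstacle I expect is the pedantic one of fixing the reading of Definition~\ref{def:learning}, which only says what is \emph{added} to $*_2$ and therefore has to be interpreted as specifying $*_2'$ as the minimal extension closed under the two listed clauses. Once that reading is pinned down, the two facts doing the real work are that a ground term admits no non-trivial variable substitution (used in $\Rightarrow$) and that $r$ is itself one of the $t_F$'s, so any clause of the update that reaches $s$ leaves a derived-term fingerprint the hypothesis forbids (used in $\Leftarrow$).
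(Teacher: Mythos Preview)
The paper does not actually supply a proof of this proposition. Immediately after the statement it only offers two sentences of informal commentary: the left-to-right direction is described as a ``persistence principle'' (beliefs only grow after hearing an explanation), and the right-to-left direction is glossed as saying that a belief held for a reason independent of the explanation must already have been held before. No argument, case analysis, or appeal to Definition~\ref{def:learning} is given.

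Your proposal is therefore strictly more detailed than what the paper provides, and it is correct. The two observations that carry the argument --- that a ground $s$ makes every variable substitution in the second clause of Definition~\ref{def:learning} the identity, and that the derived term $r$ for $claim(\mathcal{E})$ is itself one of the forbidden $t_F$'s --- are exactly the right ones, and your three-way case split in the $\Leftarrow$ direction matches the only ways the update can place something into $*_2'(s,w)$. Your explicit remark that Definition~\ref{def:learning} must be read as specifying $*_2'$ as the least extension of $*_2$ closed under the listed clauses is also well placed: the definition as written is additive but does not say so in as many words, and pinning this down is necessary for the $\Leftarrow$ direction to go through. In short, your proof is a faithful formalisation of the paper's informal gloss, and nothing is missing.
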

The direction from left to right is the persistence principle saying that an agent's beliefs always get expanded after hearing an explanation. The direction from right to left states that if after hearing the explanation the agent believes $P$ for a reason that is independent on the explanation, then before hearing the explanation the agent already believed $P$ for the same reason. In other words, having terms in our logical language also allows the agent to distinguish the justified beliefs due to the explanation from the justified beliefs due to another, unrelated reasons, which purely modal logic cannot formulate.
\begin{example}
Continued with Example 1, suppose agent 2 uses term $d_{B \to C} \cdot x_B^A$ to justify $C$ because he cannot justify the deduction from $A$ to $B$. After hearing explanation $\mathcal{E}$, it is the case that
\[
\mathcal{M}, w \models [2: \mathcal{E}] \llbracket d_{B \to C} \cdot x_B^A \rrbracket_2 C.
\]
Agent 2 continues to hear another explanation $\mathcal{E}^\prime = A / D / B$, and it is the case that
\begin{align*}
&\mathcal{M}, w \models [2: \mathcal{E}]\llbracket t_A \rrbracket_2 A,  \\ 
&\mathcal{M}, w \models [2: \mathcal{E}]\llbracket d_{A \to D} \rrbracket_2 (A \to D), \\
&\mathcal{M}, w \models [2: \mathcal{E}]\llbracket d_{D \to B} \rrbracket_2 (D \to B).
\end{align*}
Agent 2 then can construct the derived term of $B$ with respect to $\mathcal{E}^\prime$ as $d_{D \to B} \cdot (d_{A \to D} \cdot t_A)$, 
\[
\mathcal{M}, w \models [2: \mathcal{E}^\prime][2: \mathcal{E}] \llbracket d_{D \to B} \cdot (d_{A \to D} \cdot t_A) \rrbracket_2 B.
\]
Moreover, according to the learning approach in Definition \ref{def:learning}, $d_{D \to B} \cdot (d_{A \to D} \cdot t_A)$ is substituted for every occurrence of $x_B^A$ in agent 2's justification. Thus, the derived term of $C$ with respect to $\mathcal{E}$ becomes $d_{B \to C} \cdot (d_{D \to B} \cdot (d_{A \to D} \cdot t_A))$, 
\[
\mathcal{M}, w \models [2: \mathcal{E}^\prime][2: \mathcal{E}] \llbracket d_{B \to C} \cdot (d_{D \to B} \cdot (d_{A \to D} \cdot t_A)) \rrbracket_2 C.
\]
\end{example}

Now we can summarize that a user-agent profile consists of his justified beliefs and deductions. It is important for agent 1 to gain information about these two aspects of agent 2 through having consecutive conversations with agent 2 in order to provide an explanation that can be understood by agent 2.

\section{Explanation Evaluation}
In the previous section, we presented how agent 2's mental state is updated after hearing an explanation. In this section, we will investigate how agent 1 selects an explanation for announcing to agent 2. First of all, an explanation is selected from the explanations that the explainer agent is aware of, namely the explanations where all the hypotheses as well as all the derived formulas are justified by the explainer agent with ground derived terms with respect to the explanation. Secondly, an explanation that contains information that the explainee agent cannot justify should not be selected. In order to express these two requirements, we need to extend our language $\logic{J}$ with a new formula of the form $\triangle_i P$, read as "agent $i$ can justify $P$", and its evaluation is defined with respect to a multi-agent modular model $\mathcal{M}$ and a world $w$ as follows:
\begin{itemize}
\item $\mathcal{M}, w \models \triangle_i P$ iff there exists $t \in \mathrm{Gt}$ such that $\mathcal{M}, w \models \llbracket t \rrbracket_i P$.
\end{itemize}
Compared with formula $\llbracket t \rrbracket_i P$, the term $t$ is omitted in formula $\triangle_i P$, meaning that agent~$i$ can justify $A$ but we don't care how he justifies $P$. Because of the JYB constraint, we have the following validity:
\[ \models \triangle_i P \to \Box_i P.\]
\begin{definition}[Available Explanations]
Given a multi-agent modular model $\mathcal{M}$ and a world $w$, we say that an explanation $\mathcal{E}$ is available for agent 1 to agent 2 iff
\begin{itemize}
    \item for any $P \in H(\mathcal{E})$, there exists $t \in \mathrm{Gt}$ such that $\mathcal{M}, w \models \llbracket t \rrbracket_1 P$;
    \item for any $Q \in D(\mathcal{E})$, there exists $s \in \mathrm{Gt}$ such that $s$ is a derived term of $Q$ with respect to $\mathcal{E}$ and $\mathcal{M}, w \models \llbracket s \rrbracket_1 Q$;
    \item there does not exist $P \in H(\mathcal{E})$ such that $\mathcal{M}, w \models \Box_1 \lnot \triangle_2 P$;
    \item there does not exist $Q \in D(\mathcal{E})$ such that $\mathcal{M}, w \models \Box_1 \lnot \triangle_2 (    \Bigarrow{Pr(\mathcal{E}, Q)} Q)$.
\end{itemize}
%
Given a formula $F$ as a claim and a set of formulas $A$ as hypotheses, the set of agent~1's available explanations to agent 2 for proving $F$ from $A$ is denoted as 
\begin{multline*}
\lambda^{\mathcal{M},w}_{1, 2}(A, F) = \{\mathcal{E} \mid claim(\mathcal{E}) = F, H(\mathcal{E}) = A \text{ if } A \not= \emptyset, \text{ and }\\
\text{$\mathcal{E}$ is available for agent 1 to agent 2 given $\mathcal{M}$ and $w$}\}.
\end{multline*}
We  write $\lambda^{\mathcal{M},w}(A, F)$ when the agents are clear from the context.

%
\end{definition}
Compared with agent 2 that needs to construct and learn derived terms of derived formulas in an explanation, agent 1 has already justified all the derived formulas in an explanation that is available to him by derived terms and makes sure that there does not exist a hypothesis or a deduction that agent 2 cannot justify.

If an explanation is available to agent 1, he can announce it to agent 2. But when there are multiple available explanations, agent 1 must select one among them given what he believes about agent 2. The question is what principle agent 1 can hold for explanation selection. First of all, agent 1 should select an explanation that is most possible to be understood by agent 2. Looking back at our definition of understanding an explanation, we can say that one explanation is more possible to be understood by an agent than another explanation if the former one contains more hypotheses and deductions that are justified by the agent than the latter one. Besides, agent~1 is supposed to make simple explanations, which means that explanations that contain fewer deduction steps (i.e., derived formulas) are more preferred. Since the goal of agent 1 is to announce an explanation that can be understood by agent 2, it makes sense for the first principle to have priority from the second one. In this paper, we impose a total pre-order $\precsim$ over available explanations to represent the preference between two explanations with respect to these two principles. For better expression, we use $N^{\mathcal{M},w}_{1,2}(\mathcal{E})$ to denote the set of hypotheses and deductions in explanation $\mathcal{E}$ that agent 1 is not sure whether agent 2 can justify or not.
\begin{multline*}
N^{\mathcal{M},w}_{1,2}(\mathcal{E}) = \{F \mid \mathcal{M}, w \models \lnot \Box_1 \triangle_2 F, \text{ where $F$ is a hypothesis in $\mathcal{E}$, or} \\
\mathcal{M}, w \models \lnot \Box_1 \triangle_2 (\Bigarrow{Pr(\mathcal{E}, F)}  F), \text{ where $F$ is a derived formula in $\mathcal{E}$}.\} 
\end{multline*}
We might write $N^{\mathcal{M},w}(\mathcal{E})$ for short if the agents are clear from the context.

\begin{definition}[Preferences over Available Explanations]\label{def:principles}
Given a multi-agent modular model $\mathcal{M}$ and a world $w$, agent 1 provides explanations to agent 2, for any two explanations $\mathcal{E}, \mathcal{E}^\prime$, $\mathcal{E} \precsim^{\mathcal{M}, w} \mathcal{E}^\prime$ iff
\begin{itemize}
\item $|N^{\mathcal{M},w}(\mathcal{E})| > |N^{\mathcal{M},w}(\mathcal{E}^\prime)|$; or
\item $|N^{\mathcal{M},w}(\mathcal{E})| = |N^{\mathcal{M},w}(\mathcal{E}^\prime)|$ and $D(\mathcal{E}) \geq D(\mathcal{E}^\prime)$.
\end{itemize}
\end{definition}
As is standard, we also define $\mathcal{E} \sim^{\mathcal{M}, w} \mathcal{E}^\prime$ to mean $\mathcal{E} \precsim^{\mathcal{M}, w} \mathcal{E}^\prime$ and $\mathcal{E}^\prime \precsim^{\mathcal{M}, w} \mathcal{E}$, and $\mathcal{E} \prec^{\mathcal{M}, w} \mathcal{E}^\prime$ to mean $\mathcal{E} \precsim^{\mathcal{M}, w} \mathcal{E}^\prime$ and $\mathcal{E} \not \sim^{\mathcal{M}, w} \mathcal{E}^\prime$. The above definition of the preference between two explanations specifies how agent 1 selects an explanation to announce to agent 2: given two available explanations $\mathcal{E}$ and $\mathcal{E}^\prime$, agent 1 first compares two explanations in terms of the number of hypotheses and deductions that might not be justified by agent 1, and the one with less number is more preferable; if both explanations have the same number of hypotheses and deductions that might not be justified by agent 2, then agent 1 compares these two explanations in terms of the number of deduction steps in the explanations, and the one with less number is more preferable. Using this approach, agent 1 always cut out the part that he knows for sure that agent~2 can justify, or replace some part of the explanation with a shorter deduction that he knows that agent 2 can justify, making explanations shorter.

\begin{example}
Assume that we have a multi-agent modular model $\mathcal{M}$. In world $w$, agent 1 has two available explanations $\mathcal{E}_1$ and $\mathcal{E}_2$ to agent 2, where $\mathcal{E}_1 = A / B$ and $\mathcal{E}_2 = A / C / B$. Agent 1 believes that agent 2 can justify the deductions from $A$ to $B$, from $A$ to $C$, and from $C$ to $B$, but agent 1 is not sure whether agent 2 can justify the hypothesis $A$. In this case, the numbers of hypotheses and deductions in $\mathcal{E}_1$ and $\mathcal{E}_2$ that agent 1 is not sure whether agent 2 can justify are the same, namely $N(\mathcal{E}_1)^{\mathcal{M}, w} = N(\mathcal{E}_2)^{\mathcal{M}, w}$, so agent 1 needs to compare the numbers of deduction steps in $\mathcal{E}_1$ and $\mathcal{E}_2$. Because $\mathcal{E}_1$ is shorter than $\mathcal{E}_2$, namely $D(\mathcal{E}_1) < D(\mathcal{E}_2)$, we have $\mathcal{E}_1 \succ^{\mathcal{M}, w} \mathcal{E}_2$.
\end{example}

\section{Conversational Explanations}
Agent 1 has incomplete information about agent 2 in terms of his justified beliefs, but agent 1 can gain more and more information through having feedback from agent 2 on the explanations that agent 2 has announced. We first define agent 2' feedback. After agent 2 hears an explanation, he can evaluate whether he can understand the explanation. So his feedback is defined inductively as a tree that is isomorphic to a given explanation so that each node in the feedback tree corresponds to a specific formula in the explanation.
\begin{definition}[Explainees' Feedback]
Given an explanation $\mathcal{E}$, agent 2's feedback on explanation $\mathcal{E}$, denoted as $\mathcal{F}_2(\mathcal{E})$, is defined as follows:
\begin{itemize}
\item if $\mathcal{E} = A_1, \ldots A_n / B$, then $\mathcal{F}_2(\mathcal{E})$ is of the form
\begin{prooftree} 
\AxiomC{$f_1, \ldots, f_n$}
\UnaryInfC{$f_B$}
\end{prooftree}
\item if $\mathcal{E} = \mathcal{E}_1, \ldots \mathcal{E}_n / B$, then $\mathcal{F}_2(\mathcal{E})$ is of the form
\begin{prooftree} 
\AxiomC{$\mathcal{F}_2(\mathcal{E}_1), \ldots, \mathcal{F}_2(\mathcal{E}_n)$}
\UnaryInfC{$f_B$}
\end{prooftree}
\end{itemize}
where $f_k = 1 (1 \leq k \leq n)$ iff agent 2 can justify $A_k$; otherwise, $f_k = 0$; $f_B = 1$ iff agent~2 understands $\mathcal{E}$; otherwise, $f_B = 0$. Given a formula $F$ in $\mathcal{E}$, we use $\mathcal{F}_2(\mathcal{E},F)$ to extract the value in $\mathcal{F}_2(\mathcal{E})$ that corresponds to agent 2's feedback on $F$. We write $\mathcal{F}(\mathcal{E})$ and $\mathcal{F}(\mathcal{E},F)$ for short if the agent is clear from the context.
\end{definition}
As we mentioned in the previous section, if agent 2 cannot understand a deduction step, then he cannot understand all the follow-up deduction steps. Thus, if there exists one node in $\mathcal{F}(\mathcal{E})$ that has value 0, all its follow-up nodes towards the root will also have value 0. After hearing the feedback from agent 2, agent 1 can update his beliefs about agent 2 in terms of his justified beliefs. Based on the way in which we define agents' understanding of an explanation, agent 1 can interpret useful information from agent 2's feedback. When agent 2 returns 1 for a hypothesis, it simply means that agent~2 can justify the hypothesis; when agent 2 returns 1 for a derived formula, it means that agent 2 can understand the explanation for the derived formula, which also means that agent~2 can justify the hypotheses as well as the deduction steps in the explanation. On the contrary, when agent 2 returns 0 for a hypothesis, it simply means that agent 2 cannot justify the hypothesis; when agent 2 returns 0 for a derived formula but returns 1 for all of its premises, it means that agent 2 cannot understand the deduction from the premises to the derived formula but can understand all of its premises. In particular, when agent~2 returns 0 for a derived formula as well as some of its premises, agent 1 cannot tell whether the agent can justify the deduction in between. So agent 1 agent will ignore the feedback with respect to this deduction step. Given the above interpretation, agent 1 can remove his possible worlds where opposite information holds. We first define the update of an agent's epistemic state by a truth set (this is the usual definition for announcements~\cite{plaza}), then we define what an explainer agent learns from the explainee's feedback on a given explanation.
\begin{definition}[Update by a Set of Worlds]
Given a multi-agent modular model $\mathcal{M}$, a subset of worlds $X$, and an agent $i$, we define  $\mathcal{M}$ updated with $(i, X)$ by 
\[
\mathcal{M}|(i, X) = (Agt, W', \tilde{R^\prime}, \tilde{*}, \pi^\prime)
\]
as follows:
\begin{itemize}
\item $W' = X$;
\item $R_i^\prime = R_i \cap (X \times X)$;
\item $R_j^\prime = R_j$ for any other agents $j \not= i$;
\item $\pi' = \pi \cap X$.
\end{itemize}
\end{definition}

\begin{definition}[Learning from Feedback]
Given a multi-agent modular model $\mathcal{M}$, the update of $\mathcal{M}$ with agent 1 upon receiving agent 2's feedback on the explanation~$\mathcal{E}$, formally 
$\mathcal{M}|(1,\mathcal{F}(\mathcal{E}))$, is defined by a series of updates:
for each formula $F$ in $\mathcal{E}$ we update $\mathcal{M}$ with $(1,U_F)$ where $U_F$ is given by
\begin{enumerate}
\item if $F \in H(\mathcal{E})$ and $\mathcal{F}(\mathcal{E},F)=1$, then
\[
U_F= \{w \in W \mid \text{there exists a ground term $t$ with 
$\mathcal{M}, w \models \llbracket t \rrbracket_2 F $}\};
\]
\item if $F \in H(\mathcal{E})$ and $\mathcal{F}(\mathcal{E},F)=0$, then
\[
U_F= \{w \in W \mid \text{there is no ground term $t$ with 
$\mathcal{M}, w \models \llbracket t \rrbracket_2 F $}\};
\]
\item if $F \in D(\mathcal{E})$ and $\mathcal{F}(\mathcal{E},F)=1$, then
\[
U_F= \{w \in W \mid \text{there exists a ground term $t$ with $\mathcal{M}, w \models \llbracket t \rrbracket_2 (\Bigarrow{Pr(\mathcal{E}, F)}  F )$}\};
\]
\item if $F \in D(\mathcal{E})$, $\mathcal{F}(\mathcal{E},F)=0$ and for all $P \in Pr(\mathcal{E}, F)$ it is the case that $\mathcal{F}(\mathcal{E},F)=1$, then
\[
U_F= \{w \in W \mid \text{there is no ground term $t$ with $\mathcal{M}, w \models \llbracket t \rrbracket_2 (\Bigarrow{Pr(\mathcal{E}, F)}  F )$}\}.
\]
\end{enumerate}
\end{definition}
Observe that the sets $U_F$ correspond to the characterization of understanding an explanation given in Proposition~\ref{prop:understanding:1}. That means upon receiving feedback, an explainer agent updates in belief on whether the explainee agent understood the given explanation.
%
%
%

\begin{proposition}
Given a multi-agent modular model $\mathcal{M}$, the updated model $\mathcal{M}|(1,\mathcal{F}(\mathcal{E}))$ with agent 1 upon receiving agent 2's feedback on the explanation~$\mathcal{E}$ is still a multi-agent modular model.
\end{proposition}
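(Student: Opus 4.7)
The plan is to exploit the fact that, by the \emph{Learning from Feedback} definition, $\mathcal{M}|(1,\mathcal{F}(\mathcal{E}))$ is produced by an iterated sequence of world-set updates of the form $\mathcal{N}\mapsto\mathcal{N}|(1,U_F)$, one step for every formula $F$ of $\mathcal{E}$ satisfying one of the four listed cases. It therefore suffices to prove a single-step claim: if $\mathcal{N}=(Agt,W,\tilde{R},\tilde{*},\pi)$ is a multi-agent modular model, $i\in Agt$, and $X\subseteq W$ is non-empty, then $\mathcal{N}|(i,X)$ is again a multi-agent modular model. An induction on the length of the update sequence then yields the proposition.

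First I would verify the frame and evidence clauses, which are routine. Reflexivity and transitivity of $R_i'=R_i\cap(X\times X)$ follow at once from the corresponding properties of $R_i$; the relations $R_j$ for $j\ne i$ (implicitly restricted to $X\times X$) inherit the same properties from $\mathcal{N}$. Since the evidence functions $*_i$ are not modified, the two finiteness conditions on $*_i(t,w)$ continue to hold for every $w\in X$. Finally, $\pi'(P)=\pi(P)\cap X$ is a well-defined valuation into $2^{X}=2^{W'}$.

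The only non-trivial clause is JYB. The key observation is that $*_i$ takes values in $\mathcal{P}(\logic{Prop})$, so we only need to preserve truth of purely propositional formulas. I would prove a small invariance lemma by induction on $F\in\logic{Prop}$: for every $u\in X$, $\mathcal{N},u\models F$ iff $\mathcal{N}|(i,X),u\models F$. The atomic case holds because $u\in\pi(P)\cap X=\pi'(P)$, while $\bot$ and $\to$ propagate through the induction hypothesis immediately. With the lemma in hand, JYB transfers directly: given a ground $t$ with $F\in *'_i(t,w)=*_i(t,w)$ and $wR_i'u$, we have $u\in X$ and $wR_iu$, so JYB in $\mathcal{N}$ gives $\mathcal{N},u\models F$, and the lemma upgrades this to $\mathcal{N}|(i,X),u\models F$.

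The step I expect to require the most care is non-emptiness of $W'$, since it is not automatic from the update clause. Here I would appeal to the implicit fact that the feedback $\mathcal{F}(\mathcal{E})$ is generated at some actual world $w_0$, so its values truthfully record what agent 2 can justify or understand at $w_0$. Combining the clauses of the definition of $\mathcal{F}_2(\mathcal{E})$ with the characterization of understanding in Proposition~\ref{prop:understanding:1}, one checks case by case that $w_0\in U_F$ for every formula $F$ for which an update is performed. Consequently $w_0$ survives every step of the iterated restriction, so each intermediate $W'$ (and in particular the final one) is non-empty, and the frame and JYB argument above applies to every step, yielding that $\mathcal{M}|(1,\mathcal{F}(\mathcal{E}))$ is indeed a multi-agent modular model.
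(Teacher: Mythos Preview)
The paper states this proposition without proof, so there is nothing to compare your argument against; your proposal supplies exactly the routine verification the authors omit, and it is correct.

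A few remarks on points of contact with the paper. Your identification of non-emptiness as the delicate step matches the authors' own concern: immediately after introducing $[j:\mathcal{F}(\mathcal{E})]F$ they write ``we assume that the explainee's feedback is truthful (otherwise it could happen that the updated model does not contain the actual world anymore \ldots)''. Your case analysis that $w_0\in U_F$ is the precise content of that assumption; for case~4 you should spell out that if all premises of $F$ receive feedback~$1$ then, by Proposition~\ref{prop:understanding:1}, every hypothesis and every earlier deduction step below $F$ is justified, so the only possible obstruction to understanding the sub-explanation rooted at $F$ is the top step $\Bigarrow{Pr(\mathcal{E},F)}F$ itself---hence $w_0\in U_F$. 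Your JYB argument via the propositional-invariance lemma is the right move and exploits exactly the fact that $*_i$ ranges over $\mathcal{P}(\logic{Prop})$. Finally, your parenthetical that $R_j'$ must be read as implicitly restricted to $X\times X$ is a genuine (minor) gap in the paper's Definition of update by a set of worlds, not in your proof.
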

We then extend our language $\logic{J}$ with a new formula of the form $[j: \mathcal{F}(\mathcal{E})]F$, read as ``$F$ is true after agent $j$ hears feedback $\mathcal{F}(\mathcal{E})$''. We set
\[
\mathcal{M}, w \models [j: \mathcal{F}(\mathcal{E})]F \quad\text{ iff }\quad \mathcal{M}|(j, w, \mathcal{F}(\mathcal{E})), w \models F.
\]
This formula allows to express agent $j$'s updated epistemic state after hearing feedback on an explanation. Note that we assume that the explainee's feedback is truthful (otherwise it could happen that the updated model does not contain the actual world anymore, in which case we would need a truth definition similar to public announcement logic).

\begin{proposition}
Given a multi-agent modular model $\mathcal{M}$ and a world $w$, agent 1 hears feedback $\mathcal{F}(\mathcal{E})$ from agent 2 on explanation $\mathcal{E}$ in world $w$, for any formula $F$ in $\mathcal{E}$,
\begin{itemize}
\item if $F \in H(\mathcal{E})$ and $\mathcal{F}(\mathcal{E}, F) = 1$, then 
\[\mathcal{M}, w \models [1: \mathcal{F}(\mathcal{E})] \Box_1 \triangle_2 F\]
\item if $F \in H(\mathcal{E})$ and $\mathcal{F}(\mathcal{E}, F) = 0$, then
\[\mathcal{M}, w \models [1: \mathcal{F}(\mathcal{E})] \Box_1 \lnot \triangle_2 F\]
\item if $F \in D(\mathcal{E})$ and $\mathcal{F}(\mathcal{E}, F) = 1$, then
    \[\mathcal{M}, w \models [1: \mathcal{F}(\mathcal{E})] \Box_1 \triangle_2 (\Bigarrow{Pr(\mathcal{E}, F)}  F).\]
\item if $F \in D(\mathcal{E})$, $\mathcal{F}(\mathcal{E}, F) = 0$ and for any $P \in Pr(\mathcal{E},F)$ it is the case that $\mathcal{F}(\mathcal{E}, F) = 1$, then
\[\mathcal{M}, w \models [1: \mathcal{F}(\mathcal{E})] \Box_1 \lnot \triangle_2 (\Bigarrow{Pr(\mathcal{E}, F)}  F).\]
\end{itemize}
\end{proposition}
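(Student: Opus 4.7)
The plan is to unfold the semantics of the dynamic operator $[1: \mathcal{F}(\mathcal{E})]$ and observe that the multi-step update in Definition of learning from feedback leaves the evidence functions $*_1,*_2$, the valuation $\pi$, and the relation $R_2$ untouched; only $W$ is cut down to $W' = \bigcap_{F} U_F$ (intersection over the formulas $F$ in $\mathcal{E}$ for which an update clause fires) and $R_1$ is relativized to $R_1' = R_1 \cap (W' \times W')$. Because the order of the successive world-restrictions is immaterial for the intersection and for the intersected relation, the order in which we traverse the tree $\mathcal{E}$ does not affect $\mathcal{M}|(1,\mathcal{F}(\mathcal{E}))$.

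The next step is the \emph{preservation lemma} one needs in passing: for every world $u \in W'$, every agent $j$, every term $r \in \mathrm{Tm}$ and every propositional formula $G$, we have $\mathcal{M}, u \models \llbracket r \rrbracket_j G$ iff $\mathcal{M}|(1,\mathcal{F}(\mathcal{E})), u \models \llbracket r \rrbracket_j G$, since the clause for $\llbracket\cdot\rrbracket$ only consults $*_j(r,u)$, which is unchanged. Consequently $\triangle_2 G$ and $\lnot \triangle_2 G$ are preserved at every surviving world. Note also that $w \in W'$ by the truthfulness assumption on the feedback, so the formula $[1: \mathcal{F}(\mathcal{E})]\varphi$ is well-defined at $w$.

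With these two observations the four cases become almost mechanical. For a world $u$ with $w R_1' u$, membership of $u$ in $W'$ forces $u \in U_F$ for the particular $F$ under consideration; in case (1) this yields a ground $t$ with $\mathcal{M}, u \models \llbracket t \rrbracket_2 F$, hence by preservation $\mathcal{M}|(1,\mathcal{F}(\mathcal{E})), u \models \triangle_2 F$, establishing $\Box_1 \triangle_2 F$ at $w$ in the updated model; case (3) is identical with $F$ replaced by $\Bigarrow{Pr(\mathcal{E},F)} F$; cases (2) and (4) are the dual ones where $u \in U_F$ gives the non-existence of a witnessing ground term, so preservation yields $\mathcal{M}|(1,\mathcal{F}(\mathcal{E})), u \models \lnot \triangle_2 F$ (respectively $\lnot \triangle_2 (\Bigarrow{Pr(\mathcal{E},F)} F)$) at every $R_1'$-successor $u$ of $w$.

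The only nontrivial point to make explicit is the preservation lemma together with the observation that the sets $U_F$ are \emph{defined} exactly to match the characterization of understanding singled out in Proposition~\ref{prop:understanding:1}; once this is noted, the result reduces to unfolding definitions. I expect the main obstacle to be purely presentational, namely being careful that in case (4) the hypothesis ``$\mathcal{F}(\mathcal{E},F)=1$ for all $P \in Pr(\mathcal{E},F)$'' is precisely the side condition ensuring that the update clause for $F$ actually contributes a set $U_F$ to the intersection, so that our $R_1'$-successors do lie in the relevant set.
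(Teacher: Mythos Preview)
The paper does not supply an explicit proof of this proposition; it is stated immediately after the definition of learning from feedback, with only the remark that ``the sets $U_F$ correspond to the characterization of understanding an explanation given in Proposition~\ref{prop:understanding:1}'' serving as informal justification. Your argument therefore fills in exactly the details the paper leaves implicit, and it does so correctly: the key ingredients are (i) that successive world-restrictions by the sets $U_F$ compose to a single restriction to their intersection, (ii) that the evidence functions $*_2$ are untouched so that atoms of the form $\llbracket t\rrbracket_2 G$---and hence $\triangle_2 G$ and its negation---are preserved at surviving worlds, and (iii) that every $R_1'$-successor of $w$ lies in each relevant $U_F$. Your treatment of case~(4), where you point out that the side condition on the premises is precisely what guarantees that the clause for $F$ fires and contributes a $U_F$ to the intersection, is the only place where some care is needed, and you handle it properly. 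In short, your proof is correct and is the natural unwinding of the definitions; there is no alternative route in the paper to compare against.
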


Using the background information about agent 2, agent 1 can further explain what agent 2 cannot justify. However, agent~1 should always remember that its goal is to answer the initial question from agent 2 instead of infinitely explaining what agent 2 cannot justify in the previous explanation. All of these require agent 1 to memorize the conversation with agent 2. We first define the notion of conversation histories between two agents, which always start with a question.
\begin{definition}[Conversation Histories]
A finite conversation history between agent 1 and agent 2 for explaining a propositional formula $F$ is of the form 
\[
\eta = (1: ?F)(2: \mathcal{E}_1)(1: \mathcal{F}(\mathcal{E}_1)) \cdots (2: \mathcal{E}_n)(1: \mathcal{F}(\mathcal{E}_n)),
\]
where $\mathcal{E}_1 \cdots \mathcal{E}_n$ are explanations made by agent 1 to agent 2. We use 
\[
pre(\eta,k) = (1: ?F)(2: \mathcal{E}_1)(1: \mathcal{F}(\mathcal{E}_1)) \cdots (2: \mathcal{E}_k)(1: \mathcal{F}(\mathcal{E}_k))
\]
to denote the prefix of $\eta$ up to the k\textsuperscript{th} position.
\end{definition}
Given a conversation history, agent 1 can decide the explanation to be announced. The decision-making is formalized as a function $\mathcal{E}^*(\eta)$ that takes a conservation history as an input. Basically, agent 1 needs to evaluate whether it is more worthy to further explain what agent 2 cannot justify in the previous explanation, or find another way to explain the initial question, given the current information about agent 2. Let us first use $why(\mathcal{F}(\mathcal{E}))$ to denote the set of formulas that are supposed to have further explanation given feedback $\mathcal{F}(\mathcal{E})$.
\begin{multline*}
why(\mathcal{F}(\mathcal{E})) = \{P \in \mathcal{E} \mid \mathcal{F}(\mathcal{E},P) = 0 \text{ and } \\
(P \in H(\mathcal{E}) \text{ or } \mathcal{F}(\mathcal{E},G) = 1 \text{ for all } G \in Pr(\mathcal{E}, P)).\}
\end{multline*}
The set $why(\mathcal{F}(\mathcal{E}))$ contains formulas in explanation $\mathcal{E}$ on which agent 2 asks for further explanations and that are either hypotheses or derived formula whose premises are justified by agent 2. The rest of the unjustified formulas explanation $\mathcal{E}$ should not be explained, because it is not clear whether agent 2 cannot justify their premises or deductions. As we mentioned before, agent 1's goal is to answer the initial question from agent 2. Thus, when looking for the most preferred explanations, agent 1 should not only consider the explanations for the questions that were asked right now, but also the explanations for the initial question.
\begin{definition}[Most Preferred Explanations]\label{def:mostpreferred}
Assume that we are given a multi-agent modular model~$\mathcal{M}$, a world $w$, and a conversation history 
\[
\eta = (1: ?F)(2: \mathcal{E}_1)(1: \mathcal{F}(\mathcal{E}_1)) \cdots (2: \mathcal{E}_n)(1: \mathcal{F}(\mathcal{E}_n)).
\]
The set of the most preferred explanations with respect to $\eta$ is given by a function $\mathcal{E}^*(\eta)$, which is defined as follows. 
\begin{align*}
\mathcal{M}^\prime &:= \mathcal{M}|(2: \mathcal{E}_1)|(1: \mathcal{F}(\mathcal{E}_1))| \cdots |(2: \mathcal{E}_n)|(1: \mathcal{F}(\mathcal{E}_n))\\
X &:= \lambda^{\mathcal{M}^\prime,w}(\emptyset,F) \cup \bigcup_{G \in why(\mathcal{F}(2,\mathcal{E}_n))} \lambda^{\mathcal{M}^\prime,w}(Pr(\mathcal{E}_n,G),G) \\
\mathcal{E}^*(\eta) &:= \{\mathcal{E} \in X \mid \text{ for any } \mathcal{E}^\prime \in X 
\text{ it is the case that } \mathcal{E}^\prime \precsim^{\mathcal{M}^\prime, w} \mathcal{E} \}
\end{align*}
\end{definition}
In words, given a conversation history $\eta$, agent 1 evaluates explanations from the set $\lambda^{\mathcal{M}^\prime,w}(F)$, which is the set of available explanations for the initial question regarding~$F$, and $\bigcup_{G \in why(\mathcal{F}(\mathcal{E}_n))} \lambda^{\mathcal{M}^\prime,w}(Pr(\mathcal{E}_n,G), G)$, which is the set of available explanations for further explaining what agent 2 cannot understand in $\mathcal{E}_n$, and chooses the one that is most preferred based on the principles in Definition~\ref{def:principles}. For example, if agent~1 finds that it is too time-consuming to make agent~2 understand $\mathcal{E}_n$, because any further explanations contain too many deduction steps, then agent 1 might prefer explaining the claim $F$ in another simple way, if there exists one. It is also important to stress that the evaluation is made with respect to model $\mathcal{M}^\prime = \mathcal{M}|(2: \mathcal{E}_1)|(1: \mathcal{F}(\mathcal{E}_1))| \cdots |(2: \mathcal{E}_n)|(1: \mathcal{F}(\mathcal{E}_n))$, which means that agent 1 considers the latest information about agent 2 that he can infer from conversation $\eta$.

Since explanations are conducted in a conversational way, it is of great importance for the conversation to terminate. The conversation can terminate due to two reasons: either agent 2 has justified the initial claim and thus does not ask any questions, or agent 1 cannot explain more. Apparently, we would like the first one to occur. The following proposition expresses that our explanation approach ensures this desired property if there exists an explanation that agent 2 understands and agent 1 is aware of.
\begin{proposition}
Given a multi-agent modular model $\mathcal{M}$ and a world $w$, if there exists an explanation $\mathcal{E}$ such that $claim(\mathcal{E}) = F$, agent 2 understands $\mathcal{E}$ and agent 1 is aware of~$\mathcal{E}$ in world $w$, then there exists a conversation history 
\[
\eta = (1: ?F)(2: \mathcal{E}_1)(1: \mathcal{F}(\mathcal{E}_1)) \cdots (2: \mathcal{E}_n)(1: \mathcal{F}(\mathcal{E}_n)), 
\]
where $\mathcal{E}_k \in \mathcal{E}^*(pre(\eta,k-1))$, such that 
\begin{itemize}
    \item $\mathcal{M}, w \models [2: \mathcal{E}_n] \cdots [1: \mathcal{F}(\mathcal{E}_1)][2: \mathcal{E}_1] \llbracket t \rrbracket_2 F$, where $t \in \mathrm{Gt}$ is a derived term that agent 2 constructs with respect to $\mathcal{E}$;
    \item all the nodes in $\mathcal{F}(\mathcal{E}_n)$ have value 1.
\end{itemize}
\end{proposition}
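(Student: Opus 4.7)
The plan is to construct the history $\eta$ stepwise, choosing each $\mathcal{E}_k$ from the (nonempty) most-preferred set $\mathcal{E}^*(pre(\eta,k-1))$, and to prove termination via a measure that strictly decreases at each non-terminating round. The key invariant I would maintain is that the witness $\mathcal{E}$ is always an available explanation for $F$ in the current model $\mathcal{M}'$, that is $\mathcal{E}\in\lambda^{\mathcal{M}',w}(\emptyset,F)$. This rests on three observations: (i) agent 1's evidence function $*_1$ is untouched by feedback updates, so the ground derivations witnessing agent 1's awareness of $\mathcal{E}$ persist; (ii) agent 2's evidence function $*_2$ is only expanded by learning updates (the substitutions in Definition~\ref{def:learning} have no effect on ground terms), so the ground witnesses for $\triangle_2 P$ and $\triangle_2(\Bigarrow{Pr(\mathcal{E},Q)} Q)$ at $w$, supplied by agent 2's understanding of $\mathcal{E}$ via Proposition~\ref{prop:understanding:1}, continue to hold; and (iii) by reflexivity of $R_1$, these local truths at $w$ rule out $\Box_1 \lnot \triangle_2(\cdot)$ at $w$. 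Truthful feedback keeps $w$ in the world set at every update, so all these local facts survive.

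As a measure I would take
\[
M \;:=\; \#\{(\mathcal{E}',X) \mid \mathcal{E}' \in \lambda^{\mathcal{M}',w}(\emptyset,F)\ \text{and}\ X \in N^{\mathcal{M}',w}(\mathcal{E}')\},
\]
which is finite because the finite-support constraints on $*_1$ in Definition~\ref{model} limit agent 1 to justifying finitely many formulas at $w$, bounding the set of available explanations. In each round, if the chosen $\mathcal{E}_k$ has $|N^{\mathcal{M}',w}(\mathcal{E}_k)|=0$, then $\Box_1 \triangle_2(\cdot)$ holds for every hypothesis and deduction of $\mathcal{E}_k$; reflexivity of $R_1$ yields $\triangle_2(\cdot)$ at $w$, and Proposition~\ref{prop:understanding:1} then gives that agent 2 understands $\mathcal{E}_k$, so the feedback is all $1$'s and the conversation terminates. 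Otherwise $|N^{\mathcal{M}',w}(\mathcal{E}_k)|>0$; assuming the feedback is not already all $1$'s, I would pick a node $F^*$ in the feedback tree with $\mathcal{F}(\mathcal{E}_k,F^*)=0$ that is minimal in the tree order, so that either $F^*$ is a hypothesis or all its premises have feedback $1$. Then case (2) or case (4) of the Learning-from-Feedback definition fires and yields $\Box_1 \lnot \triangle_2(\cdot)$ in the updated model; combined with the availability precondition for $\mathcal{E}_k$ (which forbids $\Box_1 \lnot \triangle_2(\cdot)$ on its components beforehand), this strictly removes at least one pair from the uncertainty pool. Similarly, any $f=1$ feedback on an uncertain component triggers case (1) or (3) and resolves it. Hence $M$ strictly decreases, and after finitely many rounds the $|N|=0$ case must trigger.

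For the ground-term assertion, agent 2's understanding of $\mathcal{E}$ in the original $\mathcal{M}$ supplies a ground derived term $t$ of $F = claim(\mathcal{E})$ with respect to $\mathcal{E}$. Since $t$ has no variables, the substitutions in Definition~\ref{def:learning} leave $t$ fixed, so successive learning updates preserve $\llbracket t \rrbracket_2 F$ at $w$, and feedback updates do not modify $*_2$ at all. Hence $\llbracket t \rrbracket_2 F$ holds at $w$ in the final updated model, and the same choices used to construct $t$ in $\mathcal{M}$ remain valid, witnessing that $t$ is still a derived term of $F$ with respect to $\mathcal{E}$.

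The main obstacle is the decrement argument: feedback $f=0$ on a derived formula is only informative when every premise has $f=1$, so a naive count of $f=0$ nodes can fail to drop. Handling this requires locating a minimal $f=0$ node in the feedback tree (where the learning rule necessarily fires) and coupling its resolution with the availability precondition of $\mathcal{E}_k$ to ensure that the resolved component genuinely leaves the uncertainty pool.
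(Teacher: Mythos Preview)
Your termination strategy via a decreasing measure and your availability invariant for the witness $\mathcal{E}$ are sound ideas, and the overall shape is close in spirit to the paper's argument (which reasons more informally that the finitely many explanations get pruned until $\mathcal{E}$ itself is selected). However, two genuine gaps remain.

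First, your decrement argument fails when the chosen $\mathcal{E}_k$ comes from the $why$ part of $X$, i.e.\ $\mathcal{E}_k \in \lambda^{\mathcal{M}',w}(Pr(\mathcal{E}_{k-1},G),G)$ for some $G\in why(\mathcal{F}(\mathcal{E}_{k-1}))$ rather than $\lambda^{\mathcal{M}',w}(\emptyset,F)$. In that case the pair resolved by your minimal $f=0$ node is $(\mathcal{E}_k,\cdot)$, but this pair is not counted by $M$, which quantifies only over $\lambda^{\mathcal{M}',w}(\emptyset,F)$. Nothing prevents a run of rounds in which agent~1 keeps choosing low-$|N|$ ``further'' explanations whose components are disjoint from those of every $\mathcal{E}'\in\lambda^{\mathcal{M}',w}(\emptyset,F)$, leaving $M$ unchanged. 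A fix is to let the measure range over \emph{all} explanations agent~1 is aware of; the finiteness constraints on $*_1$ make that set finite.

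Second, and more seriously, your argument for the first bullet is incorrect. You assert that understanding $\mathcal{E}$ in the original $\mathcal{M}$ already yields $\mathcal{M},w\models\llbracket t\rrbracket_2 F$ for the ground derived term $t$, and that learning updates then preserve it. But ``understanding'' only means the derived term $t$ is ground; it does \emph{not} put $F$ into $*_2(t,w)$. That insertion happens only through the learning update $\mathcal{M}|(2,w,\mathcal{E})$ of Definition~\ref{def:learning}, which fires only if $\mathcal{E}$ is actually announced. Since the paper explicitly drops the application closure $(\dagger)$, $\llbracket t\rrbracket_2 F$ need not hold in $\mathcal{M}$ even when $t$ is ground. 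Your construction never guarantees that $\mathcal{E}$ appears among the $\mathcal{E}_k$; it only ensures that \emph{some} $\mathcal{E}_n$ receives all-$1$ feedback, and that $\mathcal{E}_n$ may even have claim $\neq F$ if it came from the $why$ pool. The paper's proof closes exactly this gap by arguing that pruning eventually forces $\mathcal{E}$ itself into the most-preferred set, at which point announcing $\mathcal{E}$ triggers the learning update and yields both $\llbracket t\rrbracket_2 F$ and all-$1$ feedback simultaneously.
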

\begin{proof}
We first regard all the explanations that agent 1 is aware of for the claim $F$ as a tree that is rooted at $F$, denoted as $T_F$. Every time agent 2 provides feedback to agent 1's previous explanation, agent 1 can have more information about agent 2 (the formulas and deductions that agent 2 can justify). With this information, agent~2 can remove certain formulas and deductions between formulas from $T_F$ so that the explanations containing formulas and deductions that agent 2 cannot justify are not available for agent 1 to agent 2 anymore. Since agent 2's feedback is always truthful due to the reflexivity of the accessibility relations, agent 1 will not remove formulas and deductions that agent 2 can justify from $T_F$. Thus, if there exists an explanation $\mathcal{E}$ such that $claim(\mathcal{E}) = F$, agent 2 understands $\mathcal{E}$ and agent 1 is aware of $\mathcal{E}$, then $\mathcal{E}$ will always stay in $T_F$. Using the approach defined in Definition~\ref{def:mostpreferred} allows agent 1 to look back at the explanations for justifying $F$ in each round of explanation selection. Recall that we have constraints on the evidence function: terms that can be used to justify a formula F are finite, and the formulas that can be justified by a given term are finite. These two constraints ensure that a formula $F$ has finitely many explanations (if there exists). As $\mathcal{E}$ will always stay in $T_F$ and the explanations that can be used to justify $F$ are finite, $\mathcal{E}$ can be found by agent 1 through $\eta$. Therefore, if there exists an explanation $\mathcal{E}$ such that $claim(\mathcal{E}) = F$, agent 2 understands~$\mathcal{E}$ and agent 1 is aware of $\mathcal{E}$, there exists $\eta$ such that after $\eta$ agent 2 can justify $F$. According to Definition \ref{def:mostpreferred}, agent 1 only explains the formulas on which agent 2 asks for further explanations as well as the initial claim. If some of the nodes in $\mathcal{F}(\mathcal{E}_n)$ have value 0, which means that agent 2 has questions about~$\mathcal{E}_n$, then agent 2 cannot construct a ground derived term for $F$ with respect to $\mathcal{E}$ due to his learning approach in Definition \ref{def:learning}, which contradicts the previous conclusion. So all of the nodes in $\mathcal{F}(\mathcal{E}_n)$ have value 1.
\end{proof}

\begin{example}
We illustrate our approach using the example that was mentioned in the introduction. A user $u$ asks a chatbot $c$ why he should drink more water. Because the chatbot has no information about the user, he randomly announces explanation $\mathcal{E}$ to the user, which is "being sick can lead to fluid loss, so drinking more water helps replenish these losses", formalized as $\mathcal{E} = sick / fluid\_loss / drink\_water$. However, the user replies to the chatbot with $\mathcal{F}(\mathcal{E}) = 1/0/0$, making the chatbot believes that the user can justify that he is sick but cannot justify the deduction from $sick$ to $fluid\_loss$,
\begin{align*}
& \mathcal{M}, w \models [c:\mathcal{F}(\mathcal{E})][u:\mathcal{E}] \Box_c \triangle_u sick, \\  
& \mathcal{M}, w \models [c:\mathcal{F}(\mathcal{E})][u:\mathcal{E}] \Box_c \lnot \triangle_u (sick \to fluid\_loss).
\end{align*}
These express the chatbot's belief about the user's background after the chatbot hears his user's feedback $\mathcal{F}(\mathcal{E})$. Next, the chatbot needs to decide what to explain. Regarding $\mathcal{E}$, the chatbot can further explain why being sick can lead to fluid loss, namely $why(\mathcal{F}(\mathcal{E})) = \{fluid\_loss\}$. Besides, the chatbot can also explain why to drink more water when being sick in another way. Suppose the chatbot recognizes that given his belief about the user's background it is too complicated to explain why being sick can lead to fluid loss, namely $\mathcal{E}^\prime = sick / \cdots / fluid\_loss$, but he can simply tell the user that being sick can make you thirsty, so you should drink more water, formalized as $\mathcal{E}^{\prime\prime} = sick / thirsty / drink\_water$. Given the conversation history 
\[\eta = (c: ?drink\_water)(u: \mathcal{E})(c: \mathcal{F}(\mathcal{E})),\] 
the chatbot believes that the user can justify the deduction from $sick$ to $thirsty$ and from $thirsty$ to $drink\_water$ in $\mathcal{E}^{\prime\prime}$. Using the principles in Definition~\ref{def:principles}, the chatbot specifies the preference over $\mathcal{E}^\prime$ and $\mathcal{E}^{\prime\prime}$, and gets $\mathcal{E}^\prime \prec^{\mathcal{M}^\prime,w} \mathcal{E}^{\prime\prime}$, where $\mathcal{M}^\prime = \mathcal{M} | (u: \mathcal{E}) | (c: \mathcal{F}(\mathcal{E}))$. Thus, the chatbot announces $\mathcal{E}^{\prime\prime}$ to the user. Since the chatbot's belief about the user is always true, the user can justify all parts in $\mathcal{E}^{\prime\prime}$ and thus understand $\mathcal{E}^{\prime\prime}$. More precisely, assume that $t,s,r \in \mathrm{Gt}$, and
\begin{align*}
& \mathcal{M}, w \models [c:\mathcal{F}(\mathcal{E})][u:\mathcal{E}]\llbracket t \rrbracket_u sick,\\
& \mathcal{M}, w \models [c:\mathcal{F}(\mathcal{E})][u:\mathcal{E}]\llbracket s \rrbracket_u (sick \to thirsty),\\
& \mathcal{M}, w \models [c:\mathcal{F}(\mathcal{E})][u:\mathcal{E}]\llbracket r \rrbracket_u (thirsty \to drink\_water).
\end{align*}
The user then constructs term $s \cdot t$ for $thirsty$ and term $r \cdot (s \cdot t)$ for $drink\_water$ with respect to $\mathcal{E}^{\prime\prime}$, and gains more justified beliefs accordingly after hearing $\mathcal{E}^{\prime\prime}$,
\begin{align*}
& \mathcal{M}, w \models [u: \mathcal{E}^{\prime\prime}][c:\mathcal{F}(\mathcal{E})][u:\mathcal{E}] \llbracket s \cdot t \rrbracket_u thirsty,\\
& \mathcal{M}, w \models [u: \mathcal{E}^{\prime\prime}][c:\mathcal{F}(\mathcal{E})][u:\mathcal{E}] \llbracket r \cdot (s \cdot t) \rrbracket_u drink\_water.
\end{align*}
Therefore, the user's feedback on $\mathcal{E}^{\prime\prime}$ is $\mathcal{F}(\mathcal{E}^{\prime\prime}) = 1/1/1$. After the chatbot hears this feedback, the conversation terminates.
\end{example}

\section{Conclusion}
It is important for our AI systems to provide personalized explanations to users that are relevant to them and match their background knowledge. A conversation between explainers and explainees not only allows explainers to obtain the explainees' background but also allows explainees to better understand the explanations. In this paper, we proposed an approach that allows an explainer to communicate personalized explanations to explainee through having consecutive conversations with the explainee. It is built on the idea that the explainee understands an explanation if and only if he can justify all formulas in the explanation. In a conversation for explanations, the explainee provides his feedback on the explanation that has just been announced, while the explainer interprets the explainee's background from the feedback and then selects an explanation for announcement given what has learned about the explainee. We proved that the conversation will terminate due to the explainee's justification of the initial claim as long as there exists an explanation for the initial claim that the explainee understands and the explainer is aware of. In the future, we would like to extend our approach with another dimension for evaluating explanations: the \emph{acceptance} of explanations. The explanation that is selected by the explainer should be not only understood, but also accepted, by the explainee. For example, a policeman does not accept an explanation for over-speed driving due to heading for a party. For this part of study, our framework needs to be enriched with evaluation standards such as values and personal norms. On the technical level, we would like to extend our logic so that the explainer can reason about the explainee's feedback for gaining information about the explainees' background according to our idea about understanding explanations.

\bibliography{sigproc,JLBibliography}
\end{document}